\newtheorem{theorem}{Theorem}
\newtheorem{remark}{Remark}
\newtheorem{lemma}{Lemma}
\newcommand{\mbf}[1]{\ensuremath{\boldsymbol{#1}}}
\newcommand{\gap}{\vspace{-0.19ex}}
\begin{document}

\title{The Finite Field Multi-Way Relay Channel with Correlated Sources: The Three-User Case}
\author{\IEEEauthorblockN{Lawrence Ong$^\dag$, Roy Timo$^\ddag$, Gottfried Lechner$^\ddag$, Sarah J. Johnson$^\dag$, and Christopher M. Kellett$^\dag$}
\IEEEauthorblockA{$^\dag$School of Electrical Engineering and Computer Science, 
The University of Newcastle, Australia\\
$^\ddag$Institute for Telecommunications Research, University of South Australia, Australia\\
Email: lawrence.ong@cantab.net, \{roy.timo, gottfried.lechner\}@unisa.edu.au, \{sarah.johnson, chris.kellett\}@newcastle.edu.au}
}

\maketitle

\begin{abstract}
The three-user finite field multi-way relay channel with correlated sources is considered. The three users generate possibly correlated messages, and each user is to transmit its message to the two other users reliably in the Shannon sense. As there is no direct link among the users, communication is carried out via a relay, and the link from the users to the relay and those from the relay to the users are finite field adder channels with additive noise of arbitrary distribution. The problem is to determine the set of all possible achievable rates, defined as channel uses per source symbol for reliable communication. For two classes of source/channel combinations, the solution is obtained using Slepian-Wolf source coding combined with functional-decode-forward channel coding.
\end{abstract}

\section{Introduction}

In this paper we study the three-user finite field multi-way relay channel (MWRC) with correlated sources, where each user is to transmit its data to the other two users reliably (in the Shannon sense) via a single relay. For two classes of source/channel combinations, we obtain a complete characterization for reliable communication, i.e., the set of all achievable rates, defined as channel uses per source symbol.

The MWRC is an extension of the two-way relay channel where two users exchange data via a relay~\cite{knopp06,rankovwittneben06,rankovwittneben07,kattigollakota07,gunduztuncel08,schnurrstanczak08}.
One application of the MWRC with correlated sources is the communication of weather stations via a satellite, where multiple stations obtain measurements of their respective local weather condition, and each station is to obtain the weather conditions at all other stations by communicating with a satellite.

We study the MWRC in which each user is to decode the data from all other users, and where there is no direct link among the users. Communication is carried out via a single relay. This system has been studied from the point of view of channel coding and source coding, independently, under different setups.

In the channel coding setups, the sources are assumed to be independent, and the channel noisy. The problem formulation is ``how many bits of data can each user send per channel use?'' \emph{Achievable rate tuples} here refer to the tuple of the number of message bits (per channel use) the users can transmit such that all other users can reliably recover their messages. The challenge is to find the \emph{capacity region} which is the closure of all achievable rate tuples. Though the capacity region of the general MWRC remains unknown, G\"und\"uz et al.~\cite{gunduzyener09} obtained asymptotic capacity results for the high SNR and the low SNR regimes for the Gaussian MWRC, and Ong et al.~\cite{ongjohnsonkellett10cl,ongmjohnsonit11} derived the capacity region of the finite field MWRC, which is achieved by \emph{functional-decode-forward} channel coding.


In the source coding setups, the sources are assumed to be correlated, but the channel noiseless. The problem formulation is ``how many bits does each node need to encode per message symbol?'' \emph{Achievable rate tuples} here refer to the tuple of the number of bits (per source symbol) allocated to the encoders for which reliable reconstruction of each message is possible by all other users. The challenge is to find the set of all achievable rate tuples. The source coding problem for the three-user MWRC was solved by Wyner et al.~\cite{wynerwolf02}, using \emph{cascaded Slepian-Wolf} source coding~\cite{slepianwolf73}. 


In this paper, we study both source and channel coding in one setup, i.e., the three-user MWRC with noisy channel and with correlated sources (see our recent work~\cite{timoonglechner11} on the two-user MWRC with correlated sources). 
In the multi-terminal network, it is well known that solving the source coding and the channel coding problems separately does not solve the source-channel problem, i.e., noisy channels with correlated sources (see, e.g., the multiple-access channel \cite{dueck81}). We will, however, show that a strategy using Slepian-Wolf source coding (an optimal source coding) and functional-decode-forward channel coding (an optimal channel coding) is optimal for two classes of the finite field MWRCs with correlated sources. 
Using an example, we will also show that this scheme might not be optimal in general.


\section{Main Results}
\begin{figure}[t]
\centering
\resizebox{8.5cm}{!}{
\begin{picture}(0,0)%
\includegraphics{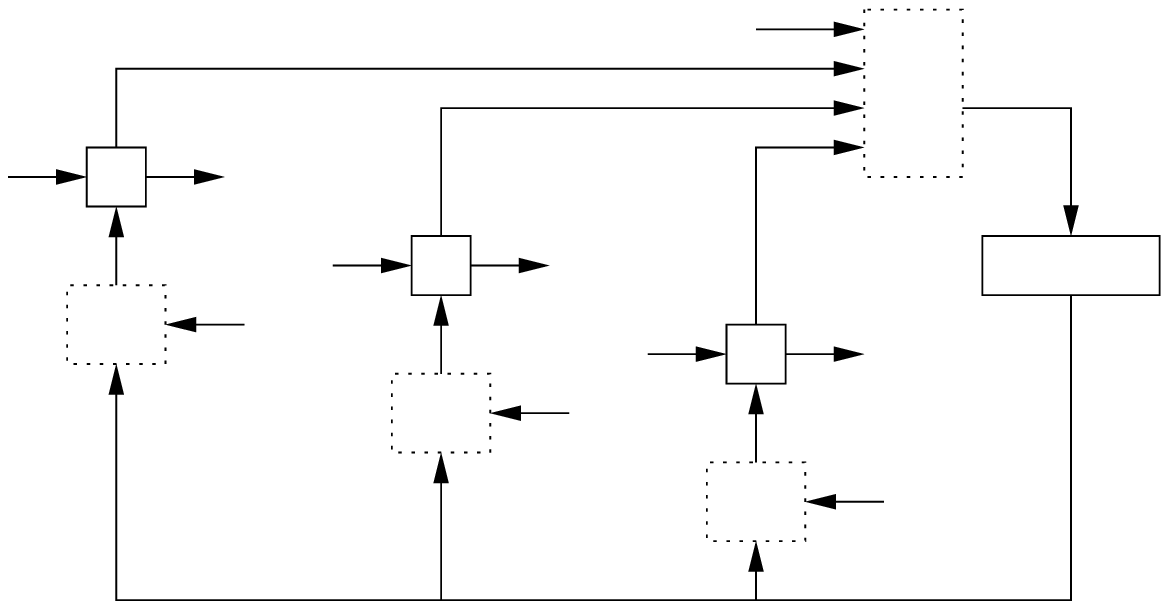}%
\end{picture}%
\setlength{\unitlength}{4144sp}%
\begingroup\makeatletter\ifx\SetFigFont\undefined%
\gdef\SetFigFont#1#2#3#4#5{%
  \fontsize{#1}{#2pt}%
  \fontfamily{#3}\fontseries{#4}\fontshape{#5}%
  \selectfont}%
\fi\endgroup%
\begin{picture}(5607,2736)(256,-2413)
\put(1621,-511){\makebox(0,0)[lb]{\smash{{\SetFigFont{12}{14.4}{\familydefault}{\mddefault}{\updefault}{\color[rgb]{0,0,0}$(\widehat{\mbf{W}_2,\mbf{W}_3})$}%
}}}}
\put(3106,-916){\makebox(0,0)[lb]{\smash{{\SetFigFont{12}{14.4}{\familydefault}{\mddefault}{\updefault}{\color[rgb]{0,0,0}$(\widehat{\mbf{W}_1,\mbf{W}_3})$}%
}}}}
\put(4546,-1321){\makebox(0,0)[lb]{\smash{{\SetFigFont{12}{14.4}{\familydefault}{\mddefault}{\updefault}{\color[rgb]{0,0,0}$(\widehat{\mbf{W}_1,\mbf{W}_2})$}%
}}}}
\put(1036,-511){\makebox(0,0)[lb]{\smash{{\SetFigFont{12}{14.4}{\familydefault}{\mddefault}{\updefault}{\color[rgb]{0,0,0}$1$}%
}}}}
\put(2521,-916){\makebox(0,0)[lb]{\smash{{\SetFigFont{12}{14.4}{\familydefault}{\mddefault}{\updefault}{\color[rgb]{0,0,0}$2$}%
}}}}
\put(4636,-151){\makebox(0,0)[lb]{\smash{{\SetFigFont{12}{14.4}{\familydefault}{\mddefault}{\updefault}{\color[rgb]{0,0,0}$\bigoplus$}%
}}}}
\put(3916,-1996){\makebox(0,0)[lb]{\smash{{\SetFigFont{12}{14.4}{\familydefault}{\mddefault}{\updefault}{\color[rgb]{0,0,0}$\bigoplus$}%
}}}}
\put(3961,-1321){\makebox(0,0)[lb]{\smash{{\SetFigFont{12}{14.4}{\familydefault}{\mddefault}{\updefault}{\color[rgb]{0,0,0}$3$}%
}}}}
\put(2476,-1591){\makebox(0,0)[lb]{\smash{{\SetFigFont{12}{14.4}{\familydefault}{\mddefault}{\updefault}{\color[rgb]{0,0,0}$\bigoplus$}%
}}}}
\put(1126,-151){\makebox(0,0)[lb]{\smash{{\SetFigFont{12}{14.4}{\familydefault}{\mddefault}{\updefault}{\color[rgb]{0,0,0}$\mbf{X}_1$}%
}}}}
\put(4051,-511){\makebox(0,0)[lb]{\smash{{\SetFigFont{12}{14.4}{\familydefault}{\mddefault}{\updefault}{\color[rgb]{0,0,0}$\mbf{X}_3$}%
}}}}
\put(811,-871){\makebox(0,0)[lb]{\smash{{\SetFigFont{12}{14.4}{\familydefault}{\mddefault}{\updefault}{\color[rgb]{0,0,0}$\mbf{Y}_1$}%
}}}}
\put(991,-1186){\makebox(0,0)[lb]{\smash{{\SetFigFont{12}{14.4}{\familydefault}{\mddefault}{\updefault}{\color[rgb]{0,0,0}$\bigoplus$}%
}}}}
\put(1666,-1186){\makebox(0,0)[lb]{\smash{{\SetFigFont{12}{14.4}{\familydefault}{\mddefault}{\updefault}{\color[rgb]{0,0,0}$\mbf{N}_1$}%
}}}}
\put(2296,-1276){\makebox(0,0)[lb]{\smash{{\SetFigFont{12}{14.4}{\familydefault}{\mddefault}{\updefault}{\color[rgb]{0,0,0}$\mbf{Y}_2$}%
}}}}
\put(3151,-1591){\makebox(0,0)[lb]{\smash{{\SetFigFont{12}{14.4}{\familydefault}{\mddefault}{\updefault}{\color[rgb]{0,0,0}$\mbf{N}_2$}%
}}}}
\put(3736,-1681){\makebox(0,0)[lb]{\smash{{\SetFigFont{12}{14.4}{\familydefault}{\mddefault}{\updefault}{\color[rgb]{0,0,0}$\mbf{Y}_3$}%
}}}}
\put(4591,-1996){\makebox(0,0)[lb]{\smash{{\SetFigFont{12}{14.4}{\familydefault}{\mddefault}{\updefault}{\color[rgb]{0,0,0}$\mbf{N}_3$}%
}}}}
\put(5041,-2311){\makebox(0,0)[lb]{\smash{{\SetFigFont{12}{14.4}{\familydefault}{\mddefault}{\updefault}{\color[rgb]{0,0,0}$\mbf{X}_0$}%
}}}}
\put(5041,-331){\makebox(0,0)[lb]{\smash{{\SetFigFont{12}{14.4}{\familydefault}{\mddefault}{\updefault}{\color[rgb]{0,0,0}$\mbf{Y}_0$}%
}}}}
\put(2611,-331){\makebox(0,0)[lb]{\smash{{\SetFigFont{12}{14.4}{\familydefault}{\mddefault}{\updefault}{\color[rgb]{0,0,0}$\mbf{X}_2$}%
}}}}
\put(271,-511){\makebox(0,0)[lb]{\smash{{\SetFigFont{12}{14.4}{\familydefault}{\mddefault}{\updefault}{\color[rgb]{0,0,0}$\mbf{W}_1$}%
}}}}
\put(1756,-916){\makebox(0,0)[lb]{\smash{{\SetFigFont{12}{14.4}{\familydefault}{\mddefault}{\updefault}{\color[rgb]{0,0,0}$\mbf{W}_2$}%
}}}}
\put(3196,-1321){\makebox(0,0)[lb]{\smash{{\SetFigFont{12}{14.4}{\familydefault}{\mddefault}{\updefault}{\color[rgb]{0,0,0}$\mbf{W}_3$}%
}}}}
\put(5131,-916){\makebox(0,0)[lb]{\smash{{\SetFigFont{12}{14.4}{\familydefault}{\mddefault}{\updefault}{\color[rgb]{0,0,0}$0$  (relay)}%
}}}}
\put(3691,164){\makebox(0,0)[lb]{\smash{{\SetFigFont{12}{14.4}{\familydefault}{\mddefault}{\updefault}{\color[rgb]{0,0,0}$\mbf{N}_0$}%
}}}}
\end{picture}%
}
\caption{The three-user finite field MWRC with correlated sources}
\label{fig:mwrc}
\end{figure}

\subsection{Channel Model}
Referring to Fig.~\ref{fig:mwrc}, nodes $1$, $2$, and $3$ are the users and node $0$ the relay. 
Let $W_i$ be the message of user $i$, and each \emph{message triplet} $(W_1,W_2,W_3)$ be independently generated according to $p(w_1,w_2,w_3)$. The memoryless finite field channel is defined by (i) the \emph{uplink:} $Y_0 = X_1 \oplus X_2 \oplus X_3 \oplus N_0$ and (ii) the \emph{downlinks:} $Y_i = X_0 \oplus N_i$, for $i \in \{1,2,3\}$, where $X_j,Y_j,N_j \in \mathcal{F}$, $\forall j \in \{0,1,2,3\}$, where $\mathcal{F}$ is a finite field with addition $\oplus$, $X_j$ is the channel input from node $j$, $Y_j$ is the channel output received by node $j$, and $N_j$ is the receiver noise of node $j$. The noise terms $N_j$'s are independent for each user and for each channel use. If the users can reliably\footnote{Reliability is defined in the Shannon sense, i.e., the probability that any user wrongly decodes any other user's message can be made as small as desired.} exchange $m$ message triplets in $n$ channel uses, the rate of $\kappa=n/m$ channel uses per source symbol is said to be \emph{achievable}.

\subsection{Definitions} \label{sec:definition}
Before presenting the main results of this paper, we define the following properties for the sources and the channel:\\
The channel is
\begin{itemize}
\item \emph{symmetrical}, if all the downlinks from the relay to the users are equally noisy, i.e., $H(N_1) = H(N_2) = H(N_3)$;
\item \emph{asymmetrical}, otherwise.
\end{itemize}

\begin{remark}
For a symmetrical channel, we do not impose the constraint that the ``noise power'' at the relay, $H(N_0)$, must equal that at the users. This means that all downlinks from the relay to the users are equally noisy, but the uplink from the users to the relay can be noisier or less noisy. 
\end{remark}

The sources have
\begin{itemize}
\item \emph{almost-balanced conditional mutual information} (ABCMI), if
\gap
\begin{multline}
I(W_i;W_j|W_k) \leq I(W_j;W_k|W_i)+ I(W_i;W_k|W_j), \\ \forall i,j,k \in \{1,2,3\} \text{ and } i \neq j \neq k; \label{eq:balanced}
\end{multline}
\item \emph{unbalanced conditional mutual information}, otherwise, i.e., there exists some user $A \in \{1,2,3\}$, such that
\gap
\begin{multline}
I(W_B;W_C|W_A) = I(W_A;W_B|W_C)\\ + I(W_A;W_C|W_B) + \eta, \label{eq:unbalanced}
\end{multline}
for some $\eta > 0$ and $B,C \in \{1,2,3\}\setminus \{A\}$ where $B \neq C$.
\end{itemize}
Note that if \eqref{eq:balanced} fails, then there necessarily exists an appropriate $\eta$ as in \eqref{eq:unbalanced}.

For the class of sources with unbalanced conditional mutual information, we say that the sources have \emph{skewed conditional entropies} (SCE) if in addition to \eqref{eq:unbalanced}, we have the following:
\gap
\begin{multline}
H(W_B,W_C|W_A)  \geq \max \Big\{ H(W_A,W_B|W_C) ,\\ H(W_A,W_C|W_B) \Big\} + \eta, \label{eq:case2a}
\end{multline}
for the same $\eta$ as in \eqref{eq:unbalanced}.

Fig.~\ref{fig:case1-2} shows the relationship among the entropies and mutual information for the three source messages $W_1$, $W_2$, and $W_3$ for the cases of ABCMI and SCE. Referring to Fig.~\ref{fig:case-1}, the shaded areas refer to the mutual information between any two source messages given the third source message. For ABCMI, we have that any of the three shaded areas must not be bigger than the sum of the other two shaded areas. Suppose that the sources do not have ABCMI, then they must have unbalanced conditional mutual information, i.e., we can find a user $A$ where $I(W_B;W_C|W_A)$ is larger than the sum of $I(W_A;W_B|W_C)$ and $I(W_A;W_C|W_B)$ by an amount $\eta$ (see Fig.~\ref{fig:case-2a}). In addition, for sources with SCE, we also have that for the two messages, $W_B$ and $W_C$, whose mutual information conditioned on $W_A$, i.e., $I(W_B;W_C|W_A)$, is larger than the sum of the other two pairs by the amount $\eta$, their entropy conditioned on $W_A$, i.e., $H(W_B,W_C|W_A)$, is also greater than that of any other pair (conditioned on the message of the third user) by at least $\eta$. 

\begin{figure}[t]
\centering
\subfigure[almost-balanced conditional mutual information (ABCMI)]{
\includegraphics[width=0.78\linewidth]{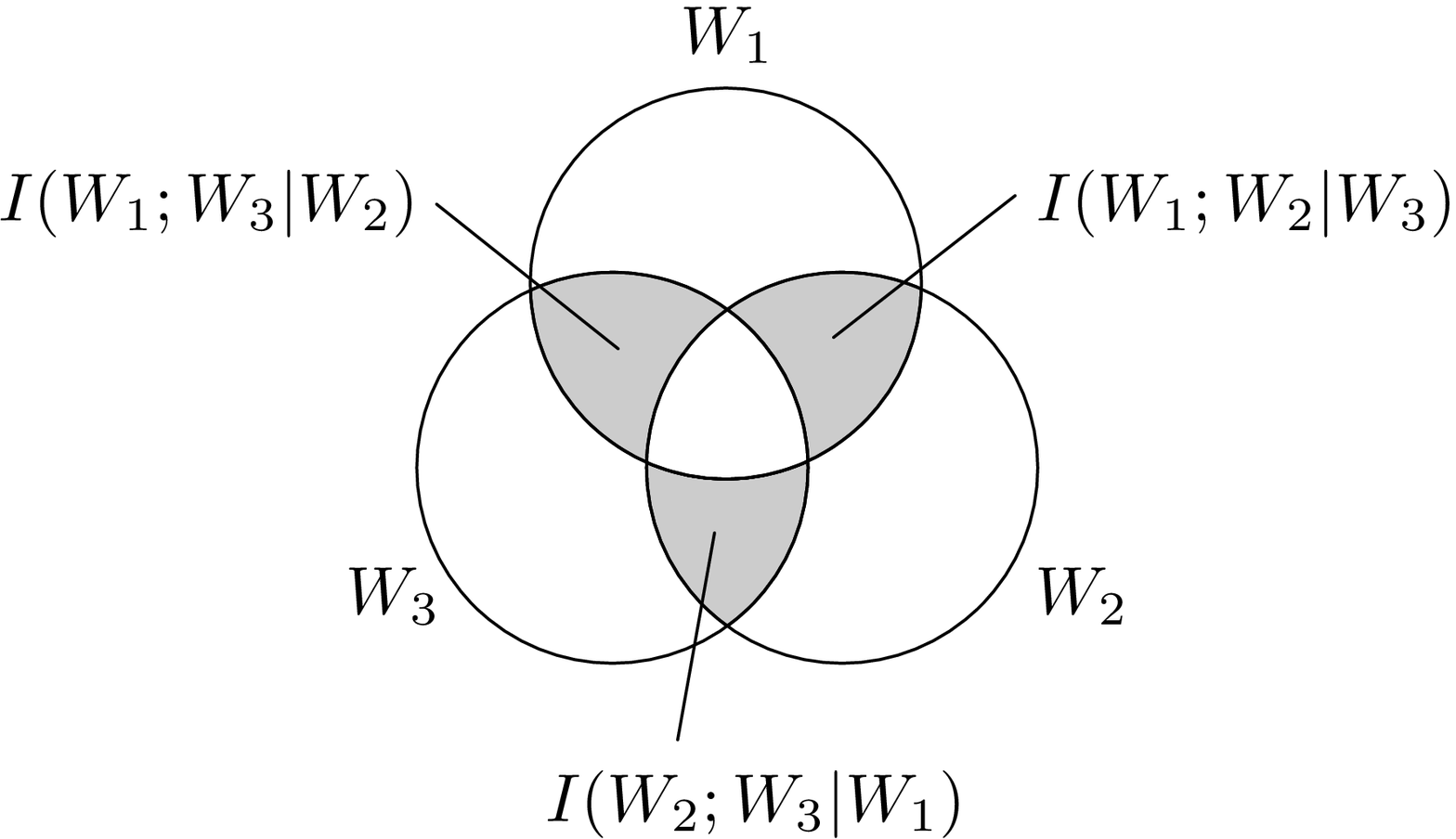}
\label{fig:case-1}
}
\subfigure[skewed conditional entropies (SCE)]{
\includegraphics[width=0.85\linewidth]{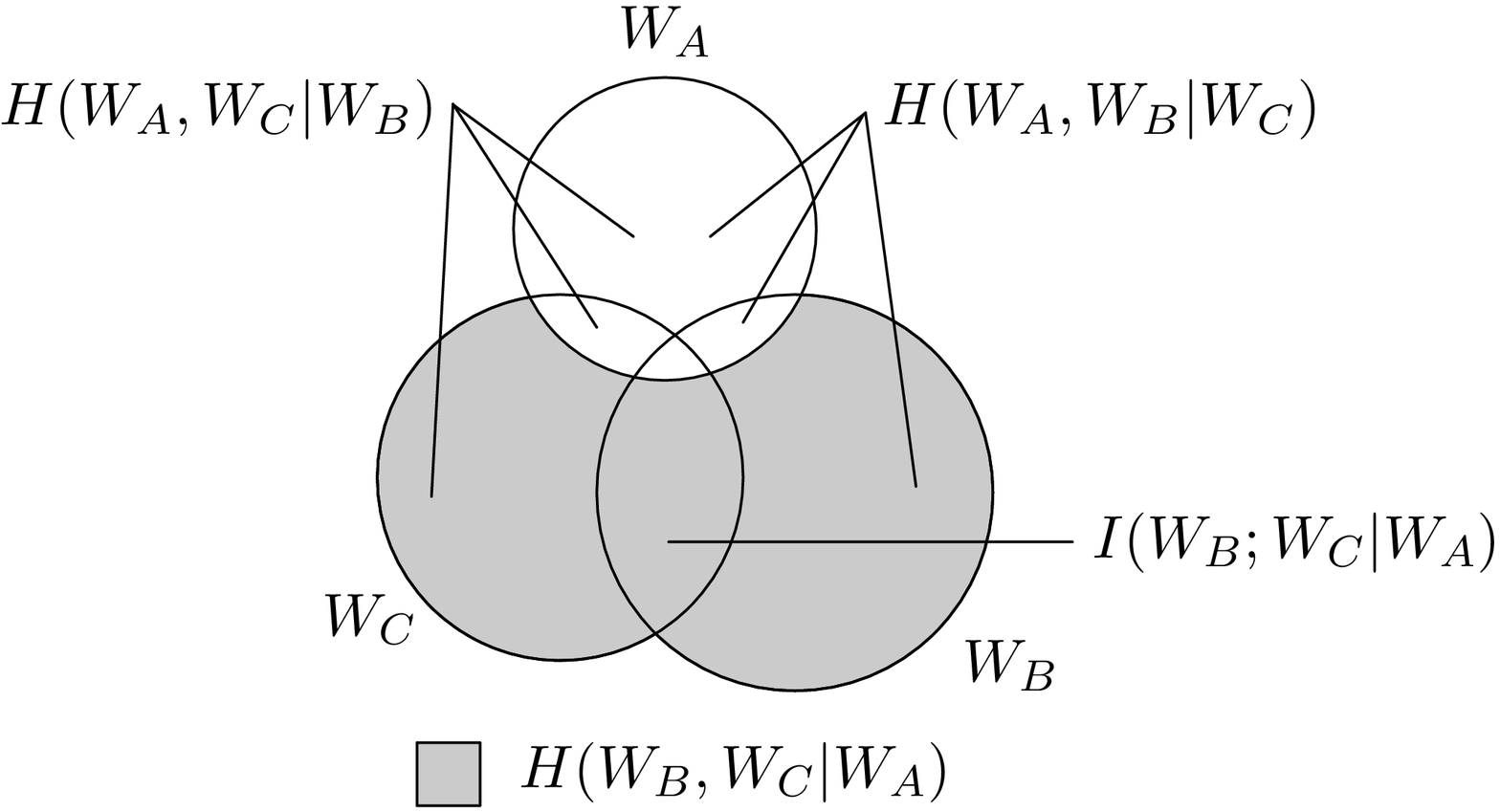}
\label{fig:case-2a}
}
\caption{Information diagrams for the sources}
\label{fig:case1-2}
\end{figure}

\subsection{Main Results}
In this paper, we show that for the two cases of (i) sources with ABCMI, and (ii) sources with SCE on symmetrical channels, necessary and sufficient conditions for a rate $\kappa>0$ to be achievable are
\gap
\begin{multline}
H(W_j,W_k|W_i) \leq \kappa \Big[\log_2 |\mathcal{F}| - \max \{ H(N_0), H(N_i) \} \Big],\\ \forall i,j,k \in \{1,2,3\} \text{ and } i \neq j \neq k. \label{eq:main}
\end{multline}
For the remaining cases, 
the above conditions are necessary, but may not be sufficient. For these cases, we obtain sufficient conditions for achievability, but there remains a gap between \eqref{eq:main} and the sufficient conditions presented. The results are summarized in Table~\ref{table:main-result}.

\begin{table*}[t]
\centering
\caption{Main Results: The Set of All Achievable Rates Found for the Following Cases}
\label{table:main-result}
\begin{tabular}{|c | c | c | c |}
\hline
\multirow{2}{*}{Source structure} & Almost-balanced conditional & \multicolumn{2}{c |}{ Unbalanced conditional mutual information} \\
\cline{3-4}
& mutual information (ABCMI) & Skewed conditional entropies (SCE) & Others \\
\hline 
\hline
Symmetrical channel & $\surd$ & $\surd$ & $\times$ \\
Asymmetrical channel & $\surd$ & $\times$ & $\times$\\
\hline
\end{tabular}
\end{table*}

\begin{remark}
Utilizing \emph{feedback} is permitted in our system model. Consider $m$ message triplets, and define $\mbf{W}_i \triangleq (W_i[1], \dotsc, W_i[m])$.  
The transmitted channel symbol of node $i$ at any time $t$ is a function of its own messages and its past received channel symbols, i.e, $X_i[t] = f_{i,t} (\mbf{W}_i, Y_i[1], \dotsc, Y_i[t])$, $\forall t \in \{1,\dotsc,n\}, \forall i \in \{0,1,2,3\}$, where $\mbf{W}_0 \triangleq \varnothing$. After $n$ channel uses, each user $i$ then estimates the messages of the other users using its own messages and its $n$ received channel symbols, i.e., $(\widehat{\mbf{W}_j,\mbf{W}_k}) = g_{i} (\mbf{W}_i,Y_i[1], \dotsc, Y_i[n])$, $\forall i,j,k \in \{1,2,3\}$ where $i \neq j \neq k$, and $j < k$.
\end{remark}

Now, we will prove the above results in Sections~\ref{section:conditions}--\ref{section:necessary-and-sufficient}.

\section{Conditions for Achievability} \label{section:conditions}

\subsection{Necessary Conditions for Achievability}

We can show the following necessary conditions for achievability:
\begin{theorem}\label{theorem:necessary-condition}
Consider a three-user finite field MWRC with correlated sources. If a rate $\kappa>0$ is achievable then 
\gap
\begin{equation}
H(W_j,W_k|W_i)  \leq \kappa \Big[\log_2|\mathcal{F}| - \max \{ H(N_0), H(N_i) \} \Big], \label{eq:necessary-condition}
\end{equation}
for all $i,j,k \in \{1,2,3\}$ where $i \neq j \neq k$.
\end{theorem}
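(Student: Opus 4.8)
The plan is to prove the converse in the standard way: convert reliable decoding at user $i$ into a Fano bound, reduce the problem to bounding a single mutual-information quantity, and then bound that quantity by two complementary cut-set-type arguments — one limited by the downlink noise $N_i$, the other by the uplink noise $N_0$. Since both bounds must hold for any achievable $\kappa$, the binding constraint is the one with the larger noise entropy, which is precisely why $\max\{H(N_0),H(N_i)\}$ appears in \eqref{eq:necessary-condition}. The same argument is then applied for each fixed user $i$ to obtain all the constraints $i\neq j\neq k$.

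Throughout, write $\mbf{Y}_i\triangleq(Y_i[1],\dots,Y_i[n])$ and similarly for $\mbf{X}_0,\mbf{N}_0,\mbf{N}_i$, and let $Y_0^{t-1}\triangleq(Y_0[1],\dots,Y_0[t-1])$. First I would use that user $i$ reconstructs $(\mbf{W}_j,\mbf{W}_k)$ from $(\mbf{W}_i,\mbf{Y}_i)$: Fano's inequality gives $H(\mbf{W}_j,\mbf{W}_k\mid\mbf{W}_i,\mbf{Y}_i)\le n\epsilon_n$ with $\epsilon_n\to0$. Because the $m$ message triplets are i.i.d., $H(\mbf{W}_j,\mbf{W}_k\mid\mbf{W}_i)=m\,H(W_j,W_k\mid W_i)$, so expanding the conditional entropy yields
\[
m\,H(W_j,W_k\mid W_i)\le I(\mbf{W}_j,\mbf{W}_k;\mbf{Y}_i\mid\mbf{W}_i)+n\epsilon_n .
\]
Everything now reduces to bounding $I(\mbf{W}_j,\mbf{W}_k;\mbf{Y}_i\mid\mbf{W}_i)$ in two ways.

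For the downlink bound I would observe that $\mbf{Y}_i=\mbf{X}_0\oplus\mbf{N}_i$ with $\mbf{N}_i$ independent of all messages and of $\mbf{X}_0$, so $(\mbf{W}_1,\mbf{W}_2,\mbf{W}_3)\to\mbf{X}_0\to\mbf{Y}_i$ is a Markov chain. Enlarging the mutual-information argument to include $\mbf{W}_i$ (which only increases it) and applying the data-processing inequality gives $I(\mbf{W}_j,\mbf{W}_k;\mbf{Y}_i\mid\mbf{W}_i)\le I(\mbf{X}_0;\mbf{Y}_i)=H(\mbf{Y}_i)-H(\mbf{N}_i)\le n[\log_2|\mathcal{F}|-H(N_i)]$, using that each field symbol carries at most $\log_2|\mathcal{F}|$ bits and that $\mbf{N}_i$ is i.i.d. across the $n$ channel uses. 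For the uplink bound I would route the dependence through the relay: since $\mbf{X}_0$ is a causal function of $\mbf{Y}_0$ and $\mbf{N}_i$ is fresh, $(\mbf{W}_j,\mbf{W}_k)\to(\mbf{Y}_0,\mbf{W}_i)\to\mbf{Y}_i$ is Markov, whence $I(\mbf{W}_j,\mbf{W}_k;\mbf{Y}_i\mid\mbf{W}_i)\le I(\mbf{W}_j,\mbf{W}_k;\mbf{Y}_0\mid\mbf{W}_i)$. Single-letterizing by the chain rule, each summand obeys $I(\mbf{W}_j,\mbf{W}_k;Y_0[t]\mid\mbf{W}_i,Y_0^{t-1})\le\log_2|\mathcal{F}|-H(Y_0[t]\mid\mbf{W}_1,\mbf{W}_2,\mbf{W}_3,Y_0^{t-1})$. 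Dividing either bound by $m$ and letting $n\to\infty$ produces the $H(N_i)$ and $H(N_0)$ branches respectively; keeping the tighter one gives \eqref{eq:necessary-condition}.

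I expect the main obstacle to be the feedback permitted in the model. Because each node's input $X_l[t]$ may depend on its past channel outputs, conditioning on all three messages does \emph{not} freeze the inputs $X_1[t],X_2[t],X_3[t]$, so one cannot simply assert $H(Y_0[t]\mid\mbf{W}_1,\mbf{W}_2,\mbf{W}_3,Y_0^{t-1})=H(N_0)$. The fix — and the step needing the most care — is to use only the one-sided inequality $H\big((X_1[t]\oplus X_2[t]\oplus X_3[t])\oplus N_0[t]\mid\cdot\big)\ge H(N_0[t])=H(N_0)$, which holds because $N_0[t]$ is independent both of the present input sum and of the conditioning variables: by causality those inputs are determined by strictly past observations, which do not involve the present noise $N_0[t]$. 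This sidesteps tracking the feedback-induced correlations explicitly, and the same independence of the fresh noise $\mbf{N}_i$ from $\mbf{X}_0$ is what legitimizes the two Markov chains used above.
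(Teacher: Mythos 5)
Your proposal follows essentially the same route the paper describes for its (omitted) proof: a cut-set argument extended to correlated sources, with the two cuts $\{j,k\}\,|\,\{i,0\}$ and $\{j,k,0\}\,|\,\{i\}$ yielding the $H(N_0)$ and $H(N_i)$ branches of the $\max$, and Fano plus the i.i.d.\ structure of the message triplets giving the left-hand side $m\,H(W_j,W_k|W_i)$. One caution: because feedback is permitted, $\mbf{N}_i$ is \emph{not} blockwise independent of $\mbf{X}_0$ (user $i$'s past noise enters its transmissions, hence the relay's observations, hence later relay inputs), so the block-level identities $I(\mbf{X}_0;\mbf{Y}_i)=H(\mbf{Y}_i)-H(\mbf{N}_i)$ and the two block Markov chains are not literally valid as stated; both cuts must be single-letterized first, after which the per-symbol freshness of $N_i[t]$ and $N_0[t]$ -- exactly the device you already invoke for the uplink -- gives $H(Y_i[t]\mid \mbf{W},Y_i^{t-1},X_0[t])=H(N_i)$ and the desired bounds.
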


To prove the above necessary conditions for achievability, we first extend the cut-set argument in \cite[pages 587--591]{coverthomas06} for networks with independent sources to those with correlated sources. We then apply the cut-set argument for networks with correlated sources to the three-user finite field MWRC. The proof is omitted because of space constraints. 

\subsection{Sufficient Conditions for Achievability}

We can also show the following sufficient condition for achievability:

\begin{theorem}\label{theorem:sufficient-condition}
Consider a three-user finite field MWRC with correlated sources. For any $\kappa >0$, 
if  there exist three non-negative real numbers $R_1$, $R_2$, and $R_3$ satisfying
\gap
\begin{align}
R_i &\geq\frac{1}{\kappa} H(W_i|W_j,W_k) \label{eq:sufficient-rate-a}\\
R_j+R_k &\geq \frac{1}{\kappa} H(W_j,W_k|W_i) \label{eq:sufficient-rate-d} \\
R_j+R_k &\leq \log_2|\mathcal{F}| - \max \{ H(N_0), H(N_i) \}, \label{eq:sufficient-rate-g}
\end{align}
for all $i,j,k \in \{1,2,3\}$ where $i \neq j \neq k$, then the rate $\kappa$ is achievable.
\end{theorem}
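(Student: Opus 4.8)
The plan is to establish achievability by \emph{separate source and channel coding}, combining Slepian--Wolf compression with functional-decode-forward transmission. Fix rates $R_1,R_2,R_3\ge 0$ satisfying \eqref{eq:sufficient-rate-a}--\eqref{eq:sufficient-rate-g}, and for a block of $m$ source symbols set $n=\kappa m$ and $\rho_i\triangleq\kappa R_i$ (bits per source symbol). Each user $i$ first applies a Slepian--Wolf random-binning encoder~\cite{slepianwolf73}, mapping its block $\mbf{W}_i$ into an index $S_i\in\{1,\dots,2^{m\rho_i}\}=\{1,\dots,2^{nR_i}\}$, so that the channel-coding rate of user $i$ is $R_i$ bits per channel use. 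The relay and the users then run functional-decode-forward on the triple $(S_1,S_2,S_3)$, after which each user $i$ source-decodes $(\mbf{W}_j,\mbf{W}_k)$ from its recovered indices and its own side information $\mbf{W}_i$.

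For the source layer, I would argue that \eqref{eq:sufficient-rate-a} and \eqref{eq:sufficient-rate-d} place $(\rho_j,\rho_k)$ inside the Slepian--Wolf region for reconstructing $(\mbf{W}_j,\mbf{W}_k)$ at user $i$ with decoder side information $\mbf{W}_i$. Multiplying \eqref{eq:sufficient-rate-a} by $\kappa$ gives $\rho_\ell\ge H(W_\ell\,|\,\text{the other two messages})$ for every $\ell$; specializing to $\ell=j$ and $\ell=k$ yields $\rho_j\ge H(W_j|W_i,W_k)$ and $\rho_k\ge H(W_k|W_i,W_j)$, while \eqref{eq:sufficient-rate-d} supplies the sum constraint $\rho_j+\rho_k\ge H(W_j,W_k|W_i)$. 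As these hold for every choice of distinct $i,j,k$, a single random binning meets all three decoders' constraints at once (by a union bound over the three decoding-error events), so there exist deterministic bin assignments under which every user reconstructs the other two blocks with error vanishing as $m\to\infty$.

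For the channel layer, I would invoke the functional-decode-forward capacity result for the finite field MWRC~\cite{ongmjohnsonit11}: treating $(S_1,S_2,S_3)$ as independent messages at rates $(R_1,R_2,R_3)$, the constraint \eqref{eq:sufficient-rate-g}, namely $R_j+R_k\le\log_2|\mathcal{F}|-\max\{H(N_0),H(N_i)\}$ for all distinct $i,j,k$, is precisely the condition under which each user $i$ reliably recovers $(S_j,S_k)$. Here $\max\{H(N_0),H(N_i)\}$ reflects that information destined for user $i$ must traverse both the shared uplink (corrupted by $N_0$) and the downlink to $i$ (corrupted by $N_i$), so the effective per-user bottleneck is the noisier of the two.

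The final step is to splice the two layers and bound the overall error by $P(\text{channel error})+P(\text{source error})$ via the union bound. The main obstacle is justifying this \emph{combination} even though the indices $(S_1,S_2,S_3)$ are in general correlated (being functions of the correlated $W_i$), whereas the channel result is stated for independent messages. I would resolve this by exploiting that functional-decode-forward is built from \emph{linear} codes over $\mathcal{F}$: the relay decodes a sum of codewords from additive noise and its broadcast is likewise linear, so the channel-decoding error depends only on the noise realizations and not on the transmitted triple. Hence the channel-error probability is identical for every $(S_1,S_2,S_3)$ and in particular vanishes for the correlated Slepian--Wolf indices (equivalently, one may pass to a maximal-error criterion). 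Both terms then vanish as $m\to\infty$ with $n=\kappa m$, proving that $\kappa$ is achievable.
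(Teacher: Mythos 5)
Your proposal is correct and follows essentially the same route as the paper's (only sketched) proof: Slepian--Wolf binning at rates $\kappa R_i$ per source symbol so that \eqref{eq:sufficient-rate-a}--\eqref{eq:sufficient-rate-d} guarantee reconstruction of $(\mbf{W}_j,\mbf{W}_k)$ from the indices plus side information $\mbf{W}_i$, and functional-decode-forward under \eqref{eq:sufficient-rate-g} to deliver the indices. Your explicit handling of the correlated-index issue via the linearity of the channel code (so the channel-error probability is message-independent) is a point the paper's sketch glosses over with ``the source codes and the channel codes can be designed separately,'' but it is consistent with, not a departure from, their argument.
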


\begin{remark}
$R_i$ in the above theorem is the ``rate'' (in bits per channel use) of the encoded message at which node $i$ transmits on the uplink, for $i \in \{1,2,3\}$.
\end{remark}

{\it Sketch of proof for Theorem~\ref{theorem:sufficient-condition}:} To prove the above sufficient condition for achievability, we use the techniques of (i) Slepian-Wolf source coding for the noiseless MWRC with correlated sources \cite{wynerwolf02}, and (ii) functional-decode-forward channel coding for the finite field MWRC with independent sources \cite{ongmjohnsonit11}. We first use Slepian-Wolf source coding: each user $i$ encodes its message $\mbf{W}_i$ to an $nR_i$-bit message $M_i$, for $i \in \{1,2,3\}$. We then use functional-decode-forward channel coding on the finite field (noisy) channel: each user $i$ transmits $M_i$ using a random linear code; the relay decodes a linear function of messages $(M_1,M_2,M_3)$, denoted by $\mbf{U}$, and forwards $\mbf{U}$ back to the users. If \eqref{eq:sufficient-rate-g} is satisfied, each user $i$ can decode the encoded messages of the other users, i.e., $M_j$ and $M_k$. Furthermore, if \eqref{eq:sufficient-rate-a}--\eqref{eq:sufficient-rate-d} are satisfied, each user $i$ can recover $(\mbf{W}_j,\mbf{W}_k)$ from $(\mbf{W}_i,M_j,M_k)$. Note that the source codes and the channel codes can be designed separately and independently. The detail of the proof is omitted because of space constraints.  

In the rest of this paper, we denote the above coding strategy of using Slepian-Wolf source coding and functional-decode-forward channel coding by SW-FDF.

\begin{remark}
SW-FDF proposed here is not simply selecting appropriate channels to support the source coding operation in \cite{wynerwolf02}. In \cite{wynerwolf02}, the relay obtains $M_i$ from each user $i$ on the uplink, $\forall i$. It then creates \emph{random bins} for $(M_1,M_2,M_3)$, and sends the bin index $M_0$ to the users. 
In SW-FDF, we do not merely find channels to support the above, i.e., uplinks that allow the relay to decode $(M_1,M_2,M_3)$, and downlinks that allow the relay to send $M_0$ to the users. In fact, we previously showed that having the relay fully decode $(M_1,M_2,M_3)$ can be sub-optimal~\cite{ongmjohnsonit11}. In SW-FDF, we first use the idea in Slepian-Wolf source coding to generate $(M_1,M_2,M_3)$, and then use functional-decode-forward for the relay to \emph{decode a function} of $(M_1,M_2,M_3)$, which depends on the channel codes chosen, and then forward this function to the users.
\end{remark}


\begin{remark}
Conditions \eqref{eq:sufficient-rate-d}--\eqref{eq:sufficient-rate-g} imply \eqref{eq:necessary-condition} in Theorem~\ref{theorem:necessary-condition}. In addition, we require \eqref{eq:sufficient-rate-a} for all $i, j, k \in \{ 1, 2, 3\}$ where $i \neq j \neq k$, which are source coding constraints, to guarantee reliable communication using SW-FDF.
\end{remark}

\section{Necessary and Sufficient Conditions} \label{section:necessary-and-sufficient}

Now, we find conditions under which the above necessary and sufficient conditions match. Note that any three-user finite field MWRC must belong to one of the following two classes.

\subsection{Sources with Almost-Balanced Conditional Mutual Information (ABCMI)}
For ABCMI, we have the following constraints:
\gap
\begin{multline}
I(W_i;W_j|W_k) \leq I(W_j;W_k|W_i)+ I(W_i;W_k|W_j), \\ \forall i,j,k \in \{1,2,3\} \text{ and } i \neq j \neq k. \label{eq:case-1}
\end{multline}

We first show the following lemma.
\begin{lemma}\label{lemma:case-1}
Consider three (possibly correlated) random variables $W_1$, $W_2$, $W_3$, and a positive number $\kappa$. If \eqref{eq:case-1} is true, then we can always find three non-negative real numbers $R_1, R_2, R_3 \geq 0$, such that 
\gap
\begin{align}
\kappa R_1 &\geq H(W_1|W_2,W_3) \label{eq:lemma-case-1-a}\\
\kappa R_2 &\geq H(W_2|W_1,W_3) \label{eq:lemma-case-1-b} \\
\kappa  R_3 &\geq H(W_3|W_1,W_2) \label{eq:lemma-case-1-c} \\
\kappa (R_1+R_2) &= H(W_1,W_2|W_3) \label{eq:lemma-case-1-d} \\
\kappa  (R_1+R_3) &= H(W_1,W_3|W_2) \label{eq:lemma-case-1-e} \\
\kappa  (R_2+R_3) &= H(W_2,W_3|W_1). \label{eq:lemma-case-1-f}
\end{align}
\end{lemma}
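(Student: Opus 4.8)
The plan is to exploit the fact that the three equality constraints \eqref{eq:lemma-case-1-d}--\eqref{eq:lemma-case-1-f} already determine $\kappa R_1,\kappa R_2,\kappa R_3$ uniquely, so the construction has no free parameters: I simply solve for these forced values and then verify that they automatically satisfy both the non-negativity requirement $R_i\ge 0$ and the three inequalities \eqref{eq:lemma-case-1-a}--\eqref{eq:lemma-case-1-c}. Concretely, I would sum \eqref{eq:lemma-case-1-d}--\eqref{eq:lemma-case-1-f} to get $2\kappa(R_1+R_2+R_3)$ as the sum of the three pairwise joint conditional entropies, and then subtract each equation in turn, obtaining for instance
\[
\kappa R_1=\tfrac{1}{2}\big[H(W_1,W_2|W_3)+H(W_1,W_3|W_2)-H(W_2,W_3|W_1)\big],
\]
with the expressions for $\kappa R_2,\kappa R_3$ following by cyclic permutation of the indices.

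Next I would rewrite each forced value in terms of the atoms of the information diagram in Fig.~\ref{fig:case-1}. Applying the chain-rule identity $H(W_1,W_2|W_3)=H(W_1|W_2,W_3)+H(W_2|W_1,W_3)+I(W_1;W_2|W_3)$ (and its two cyclic analogues) to each joint conditional entropy and cancelling, the private conditional entropies combine and the conditional mutual informations survive, yielding the clean form
\[
\kappa R_1=H(W_1|W_2,W_3)+\tfrac{1}{2}\big[I(W_1;W_2|W_3)+I(W_1;W_3|W_2)-I(W_2;W_3|W_1)\big],
\]
and symmetrically for $R_2,R_3$. Note that conditioning on the third variable removes the triple-overlap atom $I(W_1;W_2;W_3)$ from every joint conditional entropy, so no (possibly negative) triple-point term ever enters this decomposition.

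The key step is then to invoke the hypothesis. The ABCMI condition \eqref{eq:case-1} is exactly the statement that the three conditional mutual informations $I(W_i;W_j|W_k)$ obey the triangle inequalities, which is precisely what makes each bracketed correction term above non-negative. Consequently $\kappa R_i\ge H(W_i|W_j,W_k)\ge 0$ for each $i$, and this single chain of inequalities simultaneously delivers $R_i\ge 0$ and the three inequalities \eqref{eq:lemma-case-1-a}--\eqref{eq:lemma-case-1-c}; the equalities \eqref{eq:lemma-case-1-d}--\eqref{eq:lemma-case-1-f} hold by construction of the $R_i$.

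I expect the only real care to be in the entropy bookkeeping: the conceptual content is light once one recognizes that the equalities pin down a unique candidate, so that \eqref{eq:lemma-case-1-a}--\eqref{eq:lemma-case-1-c} are not design choices but statements to be checked for that one candidate. The ``hard part'' is therefore just to expand each $H(W_j,W_k|W_i)$ correctly so that the conditional mutual informations appear with the right signs; after that, identifying \eqref{eq:case-1} as a triangle-inequality condition does all of the substantive work.
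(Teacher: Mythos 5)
Your proposal is correct and arrives at exactly the choice the paper makes, namely $\kappa R_i = H(W_i|W_j,W_k) + \tfrac{1}{2}[I(W_i;W_j|W_k)+I(W_i;W_k|W_j)-I(W_j;W_k|W_i)]$, verified via the ABCMI triangle inequality; the only difference is that you derive this candidate as the unique solution of the equality constraints rather than postulating it, and you spell out the verification the paper leaves as ``it can be shown.''
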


\begin{proof}[Proof of Lemma~\ref{lemma:case-1}]
It can be shown that choosing
\gap
\begin{multline}
\kappa  R_i = H(W_i|W_j,W_k) + \frac{1}{2} \Big[ I(W_i;W_k|W_j)\\ + I(W_i;W_j|W_k) - I(W_j;W_k|W_i) \Big], \label{eq:case-1-r1-r3}
\end{multline}
for $i,j,k \in \{1,2,3\}$ and $i \neq j \neq k$ satisfies \eqref{eq:lemma-case-1-a}--\eqref{eq:lemma-case-1-f}.
\end{proof}

We now show that for ABCMI, i.e., when \eqref{eq:case-1} is true, the necessary conditions in Theorem~\ref{theorem:necessary-condition} are sufficient for reliable communications. Let $\kappa$ be any rate that satisfies \eqref{eq:necessary-condition}. First, choosing $R_1$, $R_2$, and $R_3$ in \eqref{eq:case-1-r1-r3}, from Lemma~\ref{lemma:case-1}, \eqref{eq:sufficient-rate-a}--\eqref{eq:sufficient-rate-d} are satisfied.
Substituting \eqref{eq:lemma-case-1-d}--\eqref{eq:lemma-case-1-f} into \eqref{eq:necessary-condition}, we get
\gap
\begin{equation}
R_j + R_k \leq \log_2 |\mathcal{F}| - \max \{ H(N_0), H(N_i) \},
\end{equation}
for all $i,j,k \in \{1,2,3\}$ where $i \neq j \neq k$. This is equivalent to \eqref{eq:sufficient-rate-g}. This means the rate $\kappa$ is achievable.

So, we have the following necessary and sufficient conditions for reliable communications for ABCMI:
\begin{theorem} [ABCMI]
Consider a three-user finite field MWRC with correlated sources. If the sources have ABCMI, then the rate $\kappa$ is achievable if and only if
\gap
\begin{equation}
H(W_j,W_k|W_i) \leq \kappa \Big[\log_2 |\mathcal{F}| - \max \{ H(N_0), H(N_i) \} \Big], \nonumber
\end{equation}
for all $i,j,k \in \{1,2,3\}$ where $i \neq j \neq k$.
\end{theorem}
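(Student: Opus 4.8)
The plan is to prove the two directions separately, with the forward (necessity) direction being immediate and essentially all of the effort concentrated in the converse (sufficiency). For necessity, I would observe that the displayed inequality is literally \eqref{eq:necessary-condition}: any achievable rate $\kappa>0$ must satisfy it by Theorem~\ref{theorem:necessary-condition}, and this holds regardless of whether the sources have ABCMI. So nothing new is needed here.

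For sufficiency I would fix a $\kappa>0$ satisfying the displayed inequality and reduce the achievability question to verifying the hypotheses of Theorem~\ref{theorem:sufficient-condition}; that is, I must exhibit non-negative $R_1,R_2,R_3$ that satisfy \eqref{eq:sufficient-rate-a}, \eqref{eq:sufficient-rate-d}, and \eqref{eq:sufficient-rate-g} for every assignment of distinct $i,j,k$. The natural candidate is the symmetric rate split \eqref{eq:case-1-r1-r3}. Because the sources have ABCMI, i.e. \eqref{eq:case-1} holds, Lemma~\ref{lemma:case-1} guarantees that these $R_i$ are non-negative and satisfy \eqref{eq:lemma-case-1-a}--\eqref{eq:lemma-case-1-f}. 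I would then read off the three required families directly: \eqref{eq:lemma-case-1-a}--\eqref{eq:lemma-case-1-c} give \eqref{eq:sufficient-rate-a}; the pairwise equalities \eqref{eq:lemma-case-1-d}--\eqref{eq:lemma-case-1-f} give \eqref{eq:sufficient-rate-d}, since an equality trivially implies the required $\geq$; and substituting those same equalities into the hypothesis \eqref{eq:necessary-condition} replaces $H(W_j,W_k|W_i)$ by $\kappa(R_j+R_k)$, which after cancelling $\kappa>0$ is exactly \eqref{eq:sufficient-rate-g}. With all hypotheses met, Theorem~\ref{theorem:sufficient-condition} certifies that SW-FDF achieves $\kappa$.

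The single load-bearing step is the non-negativity of the $R_i$ defined by \eqref{eq:case-1-r1-r3}, and this is precisely where ABCMI is indispensable: the bracketed term in \eqref{eq:case-1-r1-r3} is non-negative exactly when $I(W_i;W_j|W_k)+I(W_i;W_k|W_j)\geq I(W_j;W_k|W_i)$, which is the ABCMI inequality \eqref{eq:case-1}. I expect this to be the main obstacle, in the sense that it is the only point at which the hypothesis enters and the only point at which the argument could break. The deeper reason the converse closes without a gap is structural: ABCMI is exactly the condition that allows each pairwise sum $R_j+R_k$ to be matched to the conditional joint entropy $H(W_j,W_k|W_i)$ with equality, so the necessary cut-set bound and the sufficient channel-coding bound \eqref{eq:sufficient-rate-g} collapse onto the same inequality. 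For sources outside the ABCMI class this exact matching is no longer possible, which is why the same technique yields matching bounds only in the SCE subcase rather than in general.
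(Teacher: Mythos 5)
Your proposal is correct and follows essentially the same route as the paper: necessity is quoted from Theorem~\ref{theorem:necessary-condition}, and sufficiency uses the rate split \eqref{eq:case-1-r1-r3} via Lemma~\ref{lemma:case-1} to verify the hypotheses of Theorem~\ref{theorem:sufficient-condition}, with the equalities \eqref{eq:lemma-case-1-d}--\eqref{eq:lemma-case-1-f} converting the cut-set bound into \eqref{eq:sufficient-rate-g}. Your added observation that ABCMI is exactly the condition making the bracketed term in \eqref{eq:case-1-r1-r3} non-negative is accurate and correctly identifies the one place the hypothesis is used.
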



\begin{remark}
If the sources are independent, we have $H(W_i,W_j|W_k) = H(W_i) + H(W_j)$ for all $i,j,k \in \{1,2,3\}$, $i \neq j \neq k$. In this case, \eqref{eq:case-1} is always satisfied. So, reliable communication is possible if and only if $r_j + r_k \leq \log_2 |\mathcal{F}| - \max \{ H(N_0), H(N_i) \}$, for all $i,j,k \in \{1,2,3\}$ and $i \neq j \neq k$, where $r_j = H(W_j)/\kappa = mH(W_j)/n$ is the number of message bits transmitted by user $j$ per channel use. With this we recover the capacity region of the three-user MWRC with independent sources~\cite{ongmjohnsonit11}.
\end{remark}

\subsection{Sources with Unbalanced Conditional Mutual Information}
For sources with unbalanced conditional mutual information, we have the following constraint: there exists some user $A \in \{1,2,3\}$, such that
\gap
\begin{equation}
I(W_B;W_C|W_A) = I(W_A;W_B|W_C) + I(W_A;W_C|W_B) + \eta, \label{eq:case-2}
\end{equation}
for some $\eta > 0$, for some $B,C \in \{1,2,3\} \setminus \{A\}$, and $ B \neq C$. We can show that if the sources do not have ABCMI (i.e., \eqref{eq:case-1} is false), they must have unbalanced conditional mutual information. Now we prove the following lemma:


\begin{lemma}\label{lemma:case-2}
Consider three (possibly correlated) random variables $W_1$, $W_2$, $W_3$, and a positive number $\kappa$. If \eqref{eq:case-2} is true, then we can always find three non-negative real numbers $R_A, R_B, R_C \geq 0$,  such that 
\gap
\begin{align}
\kappa R_A &= H(W_A|W_B,W_C) \label{eq:lemma-case-2-a}\\
\kappa R_B &> H(W_B|W_A,W_C) \label{eq:lemma-case-2-b} \\
\kappa R_C &> H(W_C|W_A,W_B) \label{eq:lemma-case-2-c} \\
\kappa (R_A+R_B) &= H(W_A,W_B|W_C) \label{eq:lemma-case-2-d} \\
\kappa (R_A+R_C) &= H(W_A,W_C|W_B) \label{eq:lemma-case-2-e} \\
\kappa (R_B+R_C) &= H(W_B,W_C|W_A), \label{eq:lemma-case-2-f}
\end{align}
for the $\eta>0$ in \eqref{eq:case-2}.
\end{lemma}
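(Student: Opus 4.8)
The plan is to mimic the proof of Lemma~\ref{lemma:case-1}: I would exhibit an explicit triple $(R_A,R_B,R_C)$ expressed through the conditional entropies and conditional mutual informations of $(W_A,W_B,W_C)$, and then verify the required relations by direct substitution, using the chain-rule identities $H(W_i,W_j\,|\,W_k)=H(W_i\,|\,W_k)+H(W_j\,|\,W_i,W_k)$ and $H(W_i\,|\,W_k)=H(W_i\,|\,W_j,W_k)+I(W_i;W_j\,|\,W_k)$ together with the defining relation \eqref{eq:case-2}. Since \eqref{eq:lemma-case-2-a} pins $R_A$ and \eqref{eq:lemma-case-2-f} pins $R_B+R_C$, I would fix those two binding constraints first: set $\kappa R_A=H(W_A\,|\,W_B,W_C)$ and $\kappa(R_B+R_C)=H(W_B,W_C\,|\,W_A)$, leaving a single degree of freedom in how the sum $R_B+R_C$ is split.

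The step I expect to be the crux is the interaction between \eqref{eq:lemma-case-2-d}--\eqref{eq:lemma-case-2-e} and the two constraints already fixed. Writing the algebraic identity $\kappa(R_A+R_B)+\kappa(R_A+R_C)-\kappa(R_B+R_C)=2\kappa R_A$ and substituting the target pair values $H(W_A,W_B\,|\,W_C)$, $H(W_A,W_C\,|\,W_B)$, $H(W_B,W_C\,|\,W_A)$ on the left, the hypothesis \eqref{eq:case-2} collapses the right-hand side to $2H(W_A\,|\,W_B,W_C)-\eta$. Hence, once \eqref{eq:lemma-case-2-a} and \eqref{eq:lemma-case-2-f} are imposed, the two pair sums involving $A$ must together carry a combined surplus of exactly $\eta$ over their conditional entropies. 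This surplus is the algebraic fingerprint of the unbalanced hypothesis \eqref{eq:case-2}, and it explains why the all-tight template \eqref{eq:case-1-r1-r3} that served Lemma~\ref{lemma:case-1} must be modified here: the quantity $\eta$ can only be redistributed among the two $A$-pairs, not eliminated.

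I would spend the remaining freedom symmetrically, taking $\kappa R_B=H(W_B\,|\,W_A,W_C)+I(W_A;W_B\,|\,W_C)+\tfrac{\eta}{2}$ and $\kappa R_C=H(W_C\,|\,W_A,W_B)+I(W_A;W_C\,|\,W_B)+\tfrac{\eta}{2}$. By \eqref{eq:case-2} their sum is $H(W_B,W_C\,|\,W_A)$, so \eqref{eq:lemma-case-2-f} holds with equality and non-negativity of all three rates is immediate. This split makes \eqref{eq:lemma-case-2-b}--\eqref{eq:lemma-case-2-c} strict (the added $\tfrac{\eta}{2}>0$ survives even if the conditional mutual informations vanish), and it places the unavoidable surplus as $\tfrac{\eta}{2}$ on each $A$-pair, giving $\kappa(R_A+R_B)=H(W_A,W_B\,|\,W_C)+\tfrac{\eta}{2}$ and $\kappa(R_A+R_C)=H(W_A,W_C\,|\,W_B)+\tfrac{\eta}{2}$, so that the source-coding lower bounds \eqref{eq:sufficient-rate-d} demanded by Theorem~\ref{theorem:sufficient-condition} are met for the two pairs \eqref{eq:lemma-case-2-d}--\eqref{eq:lemma-case-2-e} as well. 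Finally I would record why this surplus is benign downstream: on a symmetrical channel the bound \eqref{eq:sufficient-rate-g} is the common value $\log_2|\mathcal{F}|-\max\{H(N_0),H(N_i)\}$ for every pair, while the SCE hypothesis \eqref{eq:case2a} forces $H(W_A,W_B\,|\,W_C)$ and $H(W_A,W_C\,|\,W_B)$ to sit at least $\eta$ below $H(W_B,W_C\,|\,W_A)$, leaving exactly the margin needed to absorb the $\tfrac{\eta}{2}$ overshoot; this is the mechanism linking Lemma~\ref{lemma:case-2} to the SCE entry of Table~\ref{table:main-result}.
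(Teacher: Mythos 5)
Your construction coincides with the paper's: the paper chooses $\kappa R_A = H(W_A|W_B,W_C)$ and $\kappa R_B = H(W_B|W_A,W_C) + \tfrac{1}{2}\big[ I(W_B;W_C|W_A) + I(W_A;W_B|W_C) - I(W_A;W_C|W_B)\big]$ (symmetrically for $R_C$), which after substituting \eqref{eq:case-2} is exactly your $\kappa R_B = H(W_B|W_A,W_C) + I(W_A;W_B|W_C) + \tfrac{\eta}{2}$; the verifications of \eqref{eq:lemma-case-2-a}--\eqref{eq:lemma-case-2-c} and \eqref{eq:lemma-case-2-f} then proceed identically to yours.

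Where you depart from the paper is precisely the point you isolate as the crux, and there you are right and the paper is not. Your counting identity $\kappa(R_A+R_B) + \kappa(R_A+R_C) - \kappa(R_B+R_C) = 2\kappa R_A$ shows that, under \eqref{eq:case-2} with $\eta > 0$, the six relations cannot all hold as written: once \eqref{eq:lemma-case-2-a} and \eqref{eq:lemma-case-2-f} are imposed with equality, the two pair sums involving $A$ must jointly exceed their conditional entropies by exactly $\eta$, so the exact equalities \eqref{eq:lemma-case-2-d}--\eqref{eq:lemma-case-2-e} are unattainable. The lemma as stated is therefore defective (this also explains its otherwise vacuous closing clause ``for the $\eta>0$ in \eqref{eq:case-2}'', since $\eta$ appears nowhere in the displayed relations), and your repaired version, with $\kappa(R_A+R_B) = H(W_A,W_B|W_C)+\tfrac{\eta}{2}$ and $\kappa(R_A+R_C) = H(W_A,W_C|W_B)+\tfrac{\eta}{2}$, is the coherent one. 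The paper's own proof instead asserts surpluses of $\eta$ on \emph{both} $A$-pairs --- which is an arithmetic slip (summing \eqref{eq:case-2-r1} and \eqref{eq:case-2-r2} yields $\tfrac{\eta}{2}$, as you computed, and surpluses of $\eta$ on both pairs would violate the total-surplus-$\eta$ identity) --- and then declares \eqref{eq:lemma-case-2-d}--\eqref{eq:lemma-case-2-f} ``satisfied.'' The discrepancy is immaterial downstream for exactly the reason you give: in Theorem~\ref{theorem:2a} the SCE condition \eqref{eq:2a-extra-conditions} keeps $H(W_A,W_B|W_C)$ and $H(W_A,W_C|W_B)$ at least $\eta$ below $H(W_B,W_C|W_A)$, which absorbs any surplus up to $\eta$, so the achievability argument goes through with either $\tfrac{\eta}{2}$ or $\eta$.
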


\begin{proof}[Proof of Lemma~\ref{lemma:case-2}]
Constraint \eqref{eq:case-2} implies that
\gap
\begin{align}
&I(W_B;W_C|W_A) + I(W_A;W_B|W_C) - I(W_A;W_C|W_B) \nonumber\\
&\quad \quad = 2 I(W_A;W_B|W_C) + \eta > 0 \label{eq:eta-1} \\
&I(W_B;W_C|W_A) + I(W_A;W_C|W_B) - I(W_A;W_B|W_C) \nonumber \\ 
&\quad \quad = 2 I(W_A;W_C|W_B) + \eta > 0. \label{eq:eta-2}
\end{align}

Now, we choose
\gap
\begin{align}
\kappa R_A &= H(W_A|W_B,W_C) \label{eq:case-2-r1}\\
\kappa R_B &= H(W_B|W_A,W_C) + \frac{1}{2} \Big[ I(W_B;W_C|W_A)\nonumber\\
&\quad + I(W_A;W_B|W_C) - I(W_A;W_C|W_B) \Big] \label{eq:case-2-r2}\\
\kappa R_C &= H(W_C|W_A,W_B) + \frac{1}{2} \Big[ I(W_B;W_C|W_A) \nonumber\\
&\quad + I(W_A;W_C|W_B) - I(W_A;W_B|W_C) \Big]. \label{eq:case-2-r3}
\end{align}

Substituting \eqref{eq:eta-1} into \eqref{eq:case-2-r2} and \eqref{eq:eta-2} into \eqref{eq:case-2-r3}, we have $\kappa R_B > H(W_B|W_A,W_C)$ and $\kappa R_C > H(W_C|W_A,W_B)$ respectively, i.e., \eqref{eq:lemma-case-2-b}--\eqref{eq:lemma-case-2-c} are satisfied.
Summing different pairs of \eqref{eq:case-2-r1}--\eqref{eq:case-2-r3}, we have $\kappa (R_A + R_B) = H(W_A,W_B|W_C) + \eta$, $\kappa (R_A + R_C) = H(W_A,W_C|W_B) + \eta$, and $\kappa (R_B + R_C) = H(W_B,W_C|W_A)$, i.e., \eqref{eq:lemma-case-2-d}--\eqref{eq:lemma-case-2-f} are satisfied.
\end{proof}

\subsubsection{Skewed Conditional Entropies (SCE)}

Now, we consider sources with skewed conditional entropies (SCE), i.e., those with unbalanced conditional mutual information with the following additional constraints:
\gap
\begin{align}
H(W_B,W_C|W_A) & \geq H(W_A,W_C|W_B) + \eta \label{eq:case-2a-1}\\
H(W_B,W_C|W_A) & \geq H(W_A,W_B|W_C) + \eta, \label{eq:case-2a-2}
\end{align}
which is equivalent to $H(W_B,W_C|W_A) \geq \max \{H(W_A,W_C|W_B), H(W_A,W_B|W_C)  \} + \eta$. 

For SCE, we have the following result:
\begin{theorem}[SCE] \label{theorem:2a}
Consider a three-user finite field MWRC with correlated sources. For SCE, i.e.,
\gap
\begin{align}
I(W_B;W_C|W_A) &= I(W_A;W_B|W_C) + I(W_A;W_C|W_B) + \eta, \label{eq:2a-extra-conditions-2}\\
H(W_B,W_C|W_A) & \geq \max \Big\{ H(W_A,W_C|W_B), \nonumber\\
&\quad\quad\quad\quad  H(W_A,W_B|W_C) \Big\}+ \eta, \label{eq:2a-extra-conditions}
\end{align}
for some $A,B,C \in \{1,2,3\}$, $A \neq B \neq C$, and some $\eta >0$,
if the channel is symmetrical (refer to the definition in Sec.~\ref{sec:definition}), then the rate $\kappa$ is achievable if and only if
\gap
\begin{equation}
H(W_j,W_k|W_i) \leq \kappa \Big[ \log_2 |\mathcal{F}| - \max \{ H(N_0), H(N_i) \} \Big], \label{eq:result-2a}
\end{equation}
for all $i,j,k \in \{1,2,3\}$ where $i \neq j \neq k$.
\end{theorem}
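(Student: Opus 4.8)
The plan is to establish the two directions of the ``if and only if'' separately, with essentially all of the work in the sufficiency (``if'') direction. Necessity is immediate: condition \eqref{eq:result-2a} is word-for-word the conclusion of Theorem~\ref{theorem:necessary-condition}, which holds for every three-user finite field MWRC irrespective of the source structure, so no separate argument is required there. For sufficiency, the strategy is to take an arbitrary $\kappa$ satisfying \eqref{eq:result-2a} and exhibit a non-negative rate triple $(R_A,R_B,R_C)$ that verifies every hypothesis of Theorem~\ref{theorem:sufficient-condition}; achievability by SW-FDF then follows directly from that theorem.

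Concretely, I would invoke Lemma~\ref{lemma:case-2} with the given $\kappa$. Its hypothesis \eqref{eq:case-2} is precisely the first SCE condition \eqref{eq:2a-extra-conditions-2}, so the lemma applies and produces $R_A,R_B,R_C\ge 0$. The per-user bounds \eqref{eq:lemma-case-2-a}--\eqref{eq:lemma-case-2-c} are exactly the source-coding constraints \eqref{eq:sufficient-rate-a}, and the pairwise-sum relations \eqref{eq:lemma-case-2-d}--\eqref{eq:lemma-case-2-f} immediately yield \eqref{eq:sufficient-rate-d}. Thus both source-coding families of constraints in Theorem~\ref{theorem:sufficient-condition} are discharged by the lemma alone, and the only remaining task is the channel constraint \eqref{eq:sufficient-rate-g}, namely that each pairwise sum $R_j+R_k$ is at most $\log_2|\mathcal{F}| - \max\{H(N_0),H(N_i)\}$, where $i$ is the third index.

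Here I would use channel symmetry decisively. Because $H(N_1)=H(N_2)=H(N_3)$, the three right-hand sides of \eqref{eq:sufficient-rate-g} collapse to a single value $\Delta := \log_2|\mathcal{F}| - \max\{H(N_0),H(N_1)\}$, so it suffices to show that $R_A+R_B$, $R_A+R_C$, and $R_B+R_C$ are each at most $\Delta$. The pair $(B,C)$ is painless: Lemma~\ref{lemma:case-2} makes $\kappa(R_B+R_C)$ equal to $H(W_B,W_C|W_A)$, and the $i=A$ instance of \eqref{eq:result-2a} gives $H(W_B,W_C|W_A)\le\kappa\Delta$. The delicate pairs are $(A,B)$ and $(A,C)$: the construction underlying Lemma~\ref{lemma:case-2} (equations \eqref{eq:case-2-r1}--\eqref{eq:case-2-r3}) makes $\kappa(R_A+R_B)$ and $\kappa(R_A+R_C)$ sit a strictly positive amount proportional to $\eta$ above $H(W_A,W_B|W_C)$ and $H(W_A,W_C|W_B)$, so the plain necessary bounds on these two conditional entropies are not by themselves sufficient.

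Absorbing this $\eta$-sized surplus is the step I expect to be the crux, and it is exactly what the second SCE condition \eqref{eq:2a-extra-conditions} supplies. The plan is to chain inequalities: \eqref{eq:2a-extra-conditions} certifies that $H(W_B,W_C|W_A)$ is the strictly largest conditional entropy, exceeding both $H(W_A,W_B|W_C)$ and $H(W_A,W_C|W_B)$ by at least $\eta$; substituting this margin into the inflated sums for $(A,B)$ and $(A,C)$ bounds each of them by $H(W_B,W_C|W_A)$, hence by $\kappa\Delta$ through the $i=A$ case of \eqref{eq:result-2a}. With all three instances of \eqref{eq:sufficient-rate-g} verified and \eqref{eq:sufficient-rate-a}--\eqref{eq:sufficient-rate-d} already in hand, Theorem~\ref{theorem:sufficient-condition} yields achievability and closes the ``if'' direction. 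The two points requiring care are book-keeping the exact size of the surplus against the $\eta$-margin from \eqref{eq:2a-extra-conditions}, and noting that it is symmetry --- not the necessary conditions alone --- that lets the single largest entropy $H(W_B,W_C|W_A)$ dominate all three downlink thresholds at once; this is precisely the reason the asymmetrical case falls outside the theorem.
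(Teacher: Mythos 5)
Your proposal is correct and follows essentially the same route as the paper's own proof: necessity via Theorem~\ref{theorem:necessary-condition}, then Lemma~\ref{lemma:case-2} to get $(R_A,R_B,R_C)$ satisfying \eqref{eq:sufficient-rate-a}--\eqref{eq:sufficient-rate-d}, with channel symmetry collapsing the three downlink thresholds to one and the $\eta$-margin in \eqref{eq:2a-extra-conditions} absorbing the exact $\eta$ surplus in $\kappa(R_A+R_B)$ and $\kappa(R_A+R_C)$ so that \eqref{eq:sufficient-rate-g} holds. No gaps; the chain of inequalities you describe is precisely \eqref{eq:2a-1}--\eqref{eq:2a-3} in the paper.
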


\begin{proof}[Proof of Theorem~\ref{theorem:2a}]
From Theorem~\ref{theorem:necessary-condition}, the rate $\kappa$ is achievable only if \eqref{eq:result-2a} is satisfied.

For the symmetrical channel, we can define $H(N_\text{d}) \triangleq H(N_1) = H(N_2) = H(N_3)$, where the subscript ``d'' denotes downlink. Eqn. \eqref{eq:2a-extra-conditions} for SCE implies that \eqref{eq:result-2a} reduces to
\gap
\begin{equation}
H(W_B,W_C|W_A) \leq \kappa \Big[ \log_2|\mathcal{F}| - \max\{ H(N_0), H(N_\text{d}) \} \Big]. \label{eq:necessary-2a}
\end{equation}

We need to show that for all symmetrical MWRCs where the sources have SCE, if \eqref{eq:necessary-2a} is satisfied then the rate $\kappa$ is achievable.

From Lemma~\ref{lemma:case-2}, we can choose three non-negative real numbers $R_A$, $R_B$, and $R_C$ such that conditions \eqref{eq:sufficient-rate-a}--\eqref{eq:sufficient-rate-d} in Theorem~\ref{theorem:sufficient-condition} are satisfied. In addition, for the chosen $R_A$, $R_B$, and $R_C$, we have
\gap
\begin{align}
\kappa (R_B + R_C) & = H(W_B,W_C|W_A) \nonumber\\
& \leq \kappa \Big[\log_2|\mathcal{F}| - \max\{ H(N_0), H(N_\text{d}) \} \Big] \label{eq:2a-1}\\
\kappa (R_A + R_B) & = H(W_A,W_B|W_C) + \eta \nonumber \\ &  \leq H(W_B,W_C|W_A) \nonumber\\
&\leq  \kappa \Big[\log_2|\mathcal{F}| - \max\{ H(N_0), H(N_\text{d}) \} \Big] \label{eq:2a-2}\\
\kappa (R_A + R_C) & = H(W_A,W_C|W_B) + \eta \nonumber \\ & \leq H(W_B,W_C|W_A) \nonumber\\
& \leq  \kappa \Big[\log_2|\mathcal{F}| - \max\{ H(N_0), H(N_\text{d}) \} \Big], \label{eq:2a-3}
\end{align}
where \eqref{eq:2a-1} follows from \eqref{eq:lemma-case-2-f} and \eqref{eq:necessary-2a}; \eqref{eq:2a-2} follows from \eqref{eq:lemma-case-2-d}, \eqref{eq:2a-extra-conditions}, and \eqref{eq:necessary-2a}; and \eqref{eq:2a-3} follows from \eqref{eq:lemma-case-2-e}, \eqref{eq:2a-extra-conditions}, and \eqref{eq:necessary-2a}. The above three conditions are equivalent to \eqref{eq:sufficient-rate-g} in Theorem~\ref{theorem:sufficient-condition}. This means the rate $\kappa$ is achievable.

So, for any symmetrical finite field MWRC with correlated sources with SCE, the rate $\kappa$ is achievable if and only if \eqref{eq:result-2a} is satisfied.
\end{proof}


\subsubsection{Without Skewed Conditional Entropies}
Finally, consider sources without SCE but with unbalanced conditional mutual information, i.e., \eqref{eq:case-2} with the following additional constraint:
\gap
\begin{multline}
H(W_B,W_C|W_A) < \max \Big\{ H(W_A,W_C|W_B),\\ H(W_A,W_B|W_C) \Big\} + \eta.
\end{multline}

For this case, we will show that SW-FDF might not be optimal, i.e., not all rates $\kappa$ that satisfy Theorem~\ref{theorem:necessary-condition} are achievable using the coding scheme. We consider the source structure depicted in Fig.~\ref{fig:case-2b}, i.e., $H(W_A,W_B|W_C)=H(W_A,W_C|W_B) =7$ and $H(W_B,W_C|W_A) =5$, and the channel with the following parameters:  $\log_2|\mathcal{F}| - H(N_0) = 7$, $\log_2|\mathcal{F}| - H(N_i) = 10$, $\forall i \in \{1,2,3\}$. 

\begin{figure}[t]
\centering
\includegraphics[width=3.24cm]{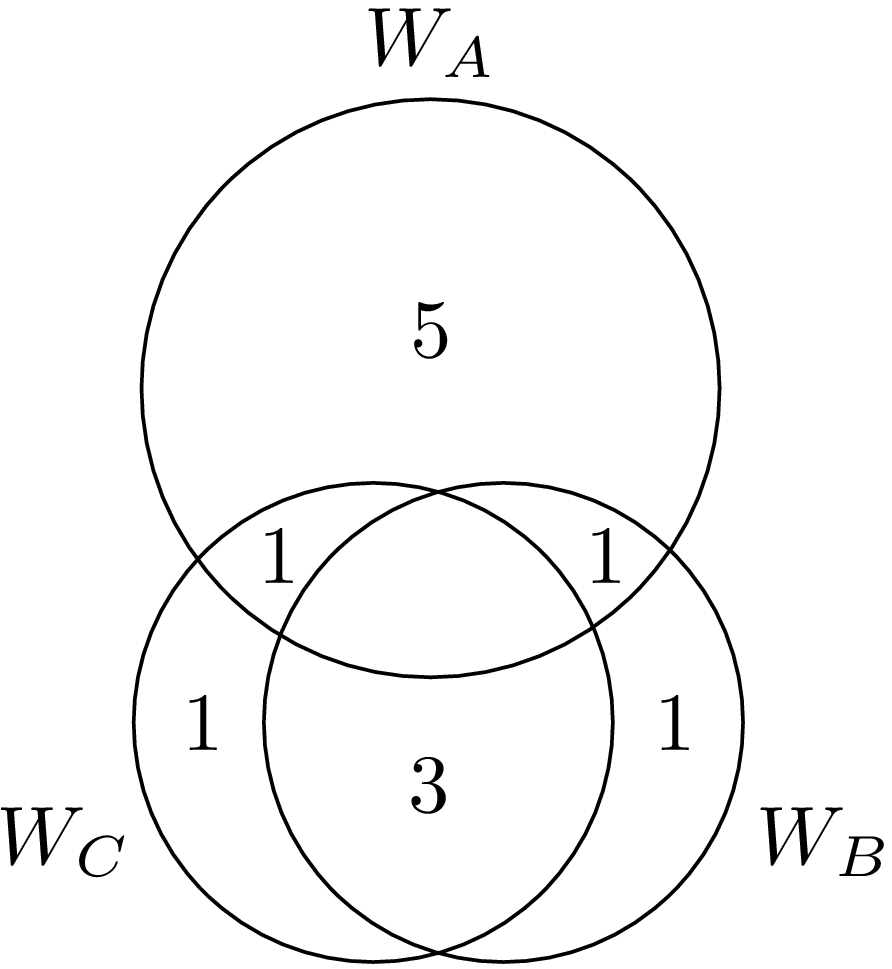}
\caption{An example of sources with unbalanced conditional mutual information and without SCE}
\label{fig:case-2b}
\vspace{-2ex}
\end{figure}

The rate of $\kappa =1$ is necessary for reliable communication (from Theorem~\ref{theorem:necessary-condition}). If $\kappa=1$ is achievable using SW-FDF, from Theorem~\ref{theorem:sufficient-condition}, we must be able to find three non-negative real numbers $R_A$, $R_B$, and $R_C$ such that
\gap
\begin{align}
R_A &\geq 5 \label{eq:contradiction-1}\\
\min\{R_B, R_C\} &\geq 1 \\
\min\{R_A+R_B, R_A + R_C\} &\geq 7 \\
R_B+R_C & \geq 5 \label{eq:contradiction-2} \\
\max\{R_A + R_B, R_A + R_C, R_B + R_C\} &\leq 7. \label{eq:contradiction-3}
\end{align}
From \eqref{eq:contradiction-1} and \eqref{eq:contradiction-2}, we must have $\max \{ R_A + R_B, R_A + R_C \} \geq 7.5$. This means \eqref{eq:contradiction-3} cannot be satisfied, and hence the rate of $\kappa=1$ is not achievable using SW-FDF.




\bibliography{arxiv}

\end{document}